\newtheorem{fact}{Fact}[section]
\newtheorem{theorem}[fact]{Theorem}
\newtheorem{definition}[fact]{Definition}
\title{A Step Towards Checking Security in IoT 
\thanks{Partially supported by Universit\`a di Pisa PRA\_2016\_64 Project \emph{Through the fog}.}} 
\author{Chiara Bodei 
\qquad
Pierpaolo Degano
\qquad
Gian-Luigi Ferrari
\qquad
Letterio Galletta
\institute{Dipartimento di Informatica, 
Universit\`a di Pisa}
\email{ \{chiara,degano,giangi,galletta\}@di.unipi.it}
}
\begin{document}
\maketitle

\begin{abstract}
The Internet of Things (IoT) is smartifying our everyday life.
Our starting point is \IoTLySa, a calculus for describing IoT systems, and its static analysis, which will be presented at Coordination 2016.
We extend the mentioned proposal in order to begin an investigation about security issues, in particular for the static verification of secrecy and some other security properties. 
\end{abstract}

\section{Introduction}\label{sec:intro}
The Internet of Things (IoT) is pervading our everyday life. 
As a consequence, it is crucial to formally reason about IoT systems, to understand and govern
the emerging technology shifts.
In a companion paper~\cite{BDFG_Coord16}, we 
introduced \IoTLySa, a dialect of {\LySa}~\cite{BBDNN_JCS,BNN03}, within the process calculi approach to IoT~\cite{lanese13,merro16}.
It has primitive constructs to describe the activity of sensors and of actuators, and to manage the coordination and communication of smart objects.
More precisely, our calculus consists of:
\begin{enumerate}
\item systems of nodes, made of physical components, i.e.~sensors and actuators, and of SW control processes for specifying the \emph{logic} of the node, e.g.~the manipulation of sensor data;

\item a shared store \`a la Linda~\cite{G85,CG01} to implement intra-node generative communications.  
The adoption of this coordination model supports a smooth implementation of the 
\emph{cyber-physical control architecture}:
physical data are made available to software entities that analyse them and trigger the relevant actuators to perform the desired behaviour;

\item an asynchronous multi-party communication among nodes, which can be easily tuned to take care of various constraints, e.g.~those concerning proximity or security;

\item functions to process and aggregate data.
\end{enumerate}

\noindent
Moreover, we equipped \IoTLySa\ with a Control Flow Analysis (CFA) that safely approximates system behaviour: it describes the interactions among nodes, how data spread from sensors to the network, and how data are manipulated.
%
Technically, our CFA abstracts from the concrete values and only considers  their provenance and how they are put together. 
In more detail, for each node $\ell$ in the network it returns: 
\begin{itemize}
\item
an abstract store $\hat{\Sigma}_\ell$ that records for each variable the set of the abstract values that it may denote; 
\item
a set $\kappa(\ell)$ that over-approximates the set of the messages received by the node $\ell$;
\item
a set $\Theta(\ell)$ of possible abstract values computed and used by the node $\ell$.
\end{itemize}
%
In this paper, we show that our CFA can be used as the basis for checking  some interesting properties of IoT systems.
An important point is that our verification techniques take as input the result of the analysis, with no need of recomputing them.
We extend the analysis in two main directions: the first tracks how actuators may be used and the second addresses security issues.
We think that the last point is crucial, because IoT systems handle a huge amount of sensitive data (personal data, business data, etc.), and because of their capabilities to affect the physical environment~\cite{IERC}.

Technically, we track the behaviour of actuators through a further component of our analysis, $\alpha_{\ell}(j)$ that contains all the possible actions the $j^{th}$ actuator may perform in the node $\ell$. 
In this way one can estimate the usage of an actuator so helping the overall design of the system, e.g.\ taking also care of access control policies.

To deal with security here we include encryption and decryption, not present in~\cite{BDFG_Coord16}.
Since cryptographic primitives are often power consuming and sensors and actuators have little battery, our extended CFA may help IoT designers  in 
detecting which security assets are to protect inside a node and in addition where cryptography is necessary or redundant. 
%
%
%
%
Indeed by classifying (abstract) values as secret or public, we can check whether values that are classified in a node are never sent to other untrusted nodes.
More generally, sensors and nodes can be assigned specific security clearance or trust levels.
Then we can detect if nodes with a lower level can access data of entities with a higher level by inspecting the analysis results.
Similarly, we can check if the predicted communication flows are allowed by a specific policy. 
In this way, one can evaluate the security level of the system in hand and discover potential vulnerabilities.

The paper is organised as follows.
In the next section, we intuitively introduce our approach by recalling the same illustrative example detailed in~\cite{BDFG_Coord16}.
In Sect.~\ref{sec:semantics} 
we briefly present the process calculus \IoTLySa, and we define our CFA in Sect.~\ref{sec:analysis}.
In Sect.~\ref{sec:extendedCFA} we adapt and apply our CFA in order to capture some security properties.
Concluding remarks and related work are in Sect.~\ref{sec:conclusion}. 

\section{A smart street light control system}\label{sec:example}


In this section, we briefly recall the smart street light control system modelled in \IoTLySa\ in~\cite{BDFG_Coord16}.
These systems represent effective solutions to improve energy efficiency~\cite{Elejoste,ECMC14} that is relevant for IoT~\cite{IERC}.

We consider a simplified system made of two integrated parts, working on a one-way street.
The first consists of smart lamp posts {$N_p$} that are battery powered and can sense their surrounding environment and can communicate with their neighbours to share their views. 
If (a sensor {$S_{p,i}$} of) the lamp post perceives a pedestrian and there is not enough light in the street it switches on the light and communicates the presence of the pedestrian to the lamp posts nearby.
When a lamp post detects that the level of battery is low, it informs the supervisor $N_s$ of the street lights that will activate other lamp posts nearby.
The second component {$N_{cp}$} of the street light controller uses the electronic access point to the street. 
When a car crosses the checkpoint, besides detecting if it is enabled to, a message is sent to the supervisor of the street accesses, $N_a$, that in turn notifies the presence of the car to $N_s$.
This supervisor sends a message to the lamp post $N_p$ closest to the checkpoint that starts a forward chain till the end of the street.
The structure of our control light system is in \figurename~\ref{fig:example}, while its \IoTLySa\ specification $N$ is in \tablename~\ref{SmartStreetLight}.
The whole intelligent controller $N$ of the street lights is described as the parallel composition of the checkpoint node $N_{cp}$, the supervisors nodes $N_a$ and $N_s$, and the nodes of lamp posts $N_p$, with $p \in [1,k]$.

\begin{figure}[t]
\centering
\includegraphics[scale=0.45]{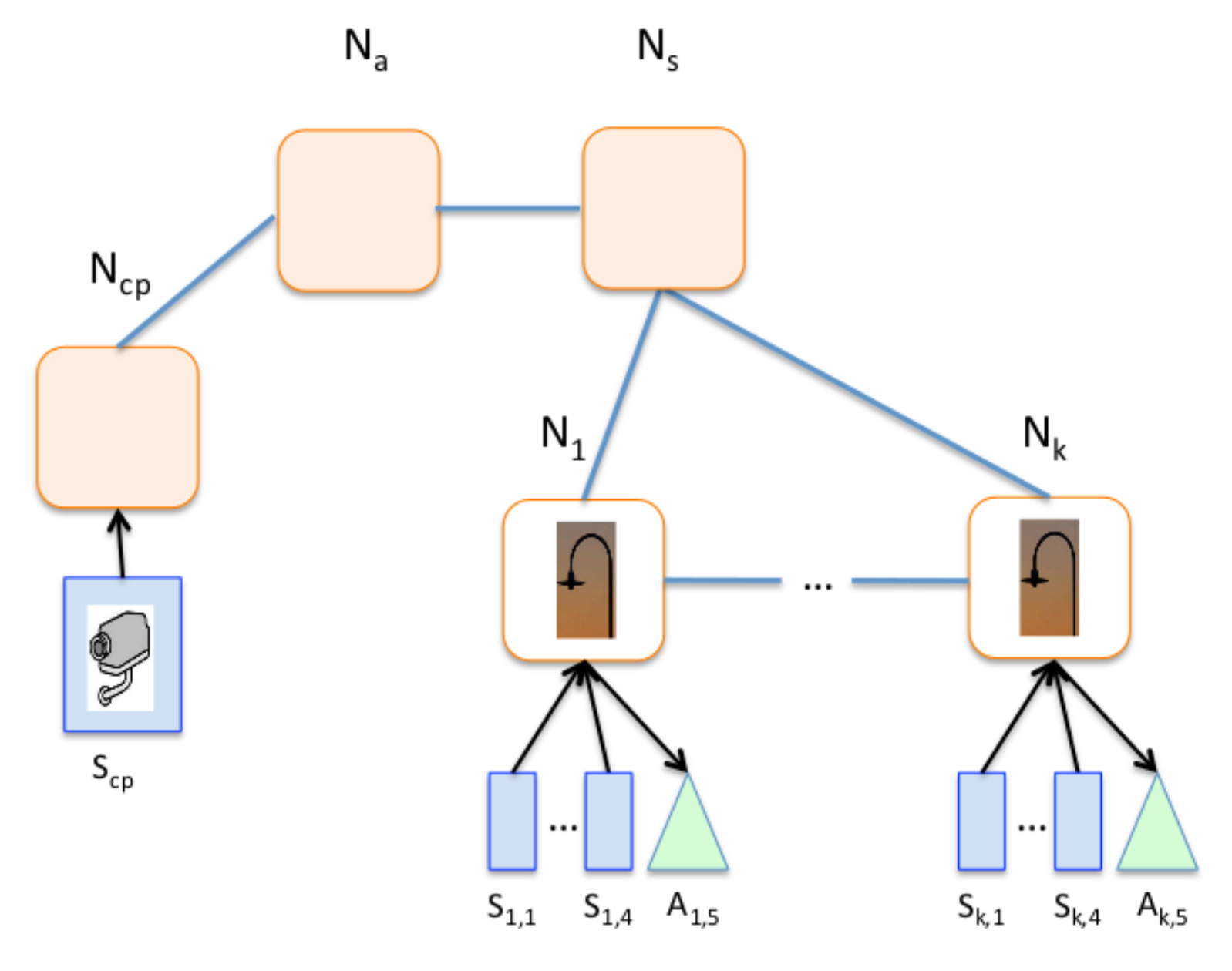}
\caption{The organisation of nodes in our street light control system.}
\label{fig:example}
\vspace{-.3cm}
\end{figure}

{\small
\begin{table*}[!tb]

 \[
 \begin{array}{l}
\mbox{\bf Checkpoint } 
\
N_{cp} = \ell_{cp} : [ P_{cp} \ \|\ S_{cp}  \ \|\ B_{cp} ]
\\
P_{cp} = \mu h. (z := 1).(z' := noiseRed(z)).\OUTM{z'}{\{\ell_a\}}. \ h
\\
S_{cp} = \mu h. (\tau.1 := v_{p}). \tau. \ h
\\
\mbox{\bf Street Supervisor} \
N_{a} = \ell_a : [\mu h. \INPS{}{x}{\OUTM{car, x}{\{\ell_s\}.\ h}} \ \|\ B_{a} ]
\\
\mbox{\bf Lamp Supervisor}  
\
N_s = \ell_s : [\mu h.\, \INPS{err}{x}{\OUTM{true}{L_x}}. \ h\, \ \|\ P_{s,1}  \ \|\ B_s]
\\
P_{s,1} = 	\mu h. \INPS{car}{x}{\OUTM{x}{\{\ell_{1}\}}. \ h}
\\
\mbox{\bf Lamp Post } \mbox{ with }p \in [1,k]
\
N_p = \ell_p : [\Sigma_p \ \|\ P_{p,1} \ \|\ P_{p,2} \ \|\ S_{p,1} \ \|\ S_{p,2} \ \|\ S_{p,3} \ \|\ S_{p,4} \ \|\ A_{p,5}]
\\
\!\!\!\! \begin{array} {ll}
P_{p,1} = \mu h. & (x_1:= 1.\, x_2:= 2.\, x_3:= 3.\, x_4:= 4). \\
& (x_4 = true)\ ?\  \\
& \qquad (x_1 \leq th_1 \land x_2 \leq th_2)\ ?\ \\
 &  \qquad \qquad \qquad \qquad\ \  \ (x_3 > th_3)\ ?\ \OUTS{5,\mathsf{turnon}}{\OUTM{x_4}{L_p}}.\ h\\
 &  \qquad \qquad \qquad \qquad \qquad \qquad \quad \ \ : \ \OUTM{\mathsf{err}, \ell_p}{\{\ell_s\}}. \ h \\
  &  \qquad \qquad \qquad \qquad \qquad \qquad : h \\
    & \ \ \ \ \qquad \qquad: \  \OUTS{5,\mathsf{turnoff}}{h}
\end{array}
\\
P_{p,2} = \mu h. \INPS{}{x}{(x = true \lor is\_a\_car(x)) \ ? \ (\OUTS{5,\mathsf{turnon}}{\OUTM{x}{L_p}). h}:  \OUTS{5,\mathsf{turnoff}} h}
\\
S_{p,i} = \mu\,h.\, (i := v). \ \tau.\;h
\\
A_{5} = \mu\,h.\, (\!|5, \{\mathsf{turnon}, \mathsf{turnoff} \}|\!).\;h
\\
\end{array} 
\] 
\caption{Smart Street Light Control System $N = N_{cp} \mid N_a \mid N_s \mid N_1 \mid \dots \mid N_k$.}\label{SmartStreetLight}
\end{table*}
}

The checkpoint $N_{cp}$ is an \IoTLySa\ node with label $\ell_{cp}$ that only contains a visual sensor $S_{cp}$ to take a picture of the car detected in the street, 
a process $P_{cp}$ and the component $B_{cp}$, which abstracts other components we are not interested in.
The sensor communicates the picture $v_p$ to the node by storing it in the location $1$ of the shared store ($1$ is the identifier of the sensor $S_{cp}$ in the assignment $z := 1$). 
In our model we assume that each sensor has a reserved store location in which records its readings.
The action $\tau$ denotes internal actions of the sensor, which we are not interested in; the construct $\mu$ implements the iterative behaviour of the sensor.
Then, the taken picture is enhanced by the process $P_{cp}$, by using the function $noiseRed$ to reduce the noise and sent to the supervisor $N_a$.

The node $N_a$ receives the enhanced picture from $N_{cp}$, (checks if the car is allowed to enter the street and) and communicates its presence to the lamp posts supervisor $N_s$.

The process $P_{s,1}$, inside the supervisor $N_s$, receives the picture from $N_a$ and sends a message to the node closest to the checkpoint $N_1$, 
with label $\ell_1$.
The input $(car; x)$ is performed only if the corresponding output matches the constant $car$, and the store variable $x$ is bound to the value of the second element of the output (see the full definition of $N_s$).
 
In our intelligent street light control system there is a node $N_p$ for each lamp post, each of which has a unique identifier $p \in [1,k]$.
Each lamp post $N_p$ has a store $\Sigma_p$ shared with its components and 
four sensors to sense $(1)$ the environment light, $(2)$ the solar light, $(3)$ the battery level and $(4)$ the presence of a pedestrian. 
Each of them is defined by $S_{p,i}$
where $v$ is the perceived value and $i \in [1,4]$ are the store locations for the sensors.  
After some internal actions $\tau$, the sensor $S_{p,i}$ iterates its behaviour.
The actuator for the lamp post $p$ is defined by $A_{5}$.
It only accepts a message from $N_c$ whose first element is its identifier (here 5) and whose second element is either command $\mathsf{turnon}$ or $\mathsf{turnoff}$ and executes it. 

Two parallel processes compose the control process $N_{p}$ of a lamp post node: $P_{p,1}$ and  $P_{p,2}$. 
The first one reads the current values from the sensors and stores them into the local variables $x_i$. 
The actuator is turned on if (i) a pedestrian is detected in the street ($x_4$ holds), (ii) the intensity of environment and solar lights are greater than the given thresholds 
$th_1$ and $th_2$, and (iii) there is enough battery (more than $th_3$).
In addition, the presence of the pedestrian is communicated to the lamp posts nearby, whose labels, typically $\ell_{p-1}$ and $\ell_{p+1}$, are in $L_p$.
Instead, if the battery level is insufficient, an error message, including its identifier $\ell_p$, is sent to the supervisor node, labelled $\ell_s$.
%
The second process waits for messages from its neighbours or from the supervisor node $N_s$.
When one of them is notified the presence of a pedestrian ($x = true$) or of a car ($is\_a\_car(x)$ holds), the current lamp post orders the actuator to switch the light on.
%

In the lamp post supervisor $N_s$, in parallel with $P_{s,1}$, there is the process $\mu h.\, \INPS{err}{x}{\OUTM{true}{L_x}}. \ h$.
As above the input $(err; x)$ is performed only if the corresponding output matches the constant $err$, and the store variable $x$ is bound to the value of the second element of the output i.e.\ the label of the relevant lamp post.
In this case, 
$N_s$ warns the lamp posts nearby $x$ (included in $L_x$) of the presence of a pedestrian.

We would like to statically predict how the system behaves at run time. 
In particular, we want to compute:
$(i)$ how nodes interact with each other; 
$(ii)$ how data spread from sensors to the network (tracking); 
$(iii)$ which computations each node performs on the received data; and (iv) which actions an actuator may trigger.
To do that, we define a Control Flow Analysis (CFA), which abstracts from the concrete values by only considering their provenance and how they are manipulated. 
Consider the picture sent by the camera of $S_{cp}$ to its control process $P_{cp}$. 
In the CFA we are only interested in tracking where the picture comes from, and not in its actual value; 
so we use the  abstract value $1^{\ell_{cp}}$ to record the camera that took it. 
The process $P_{cp}$ reduces the noise in the pictures and sends the result to $N_a$. 
Our analysis keeps track of this manipulation through the abstract value $noiseRed^{^{\ell_{cp}}}(1^{\ell_{cp}})$, 
meaning that the function $noiseRed$, in the node $\ell_{cp}$, is applied to data coming from the sensor with identifier $1$ of $\ell_{cp}$.

In more detail, our CFA returns for each node $\ell$ in the network: 
an abstract store $\hat{\Sigma}_{\ell}$ that records for each variable a super-set of the abstract values that it may denote at run time; 
a set $\kappa(\ell)$ that approximates the set of the messages received by the node $\ell$; 
the set $\Theta(\ell)$ of possible abstract values computed and used by the node $\ell$; and the set $\alpha_{\ell}(j)$ that collects all the actions that the actuator $j$ can trigger.

Here, for each lamp post labelled $\ell_p$ and for its actuator $j$ the analysis returns $(i)$ in $\alpha_{\ell_p}(j)$ the actions $\mathsf{turnon},\mathsf{turnoff}$;
$(ii)$ in $\kappa(\ell_p)$ both the abstract value $noiseRed^{^{\ell_{cp}}}(1^{\ell_{cp}})$ and the sender $\ell_{p+1}$ of that message.
{We can exploit the result of our analysis to perform some security verifications (see Sect.~\ref{sec:extendedCFA}), 
e.g.~since the pictures of cars are sensitive data, one would like to check whether they are kept secret.
By inspecting $\kappa$
we discover that the sensitive data of cars is sent to all lamp posts, so possibly violating privacy.
Also, we could verify if the predicted communications between nodes respect a given policy, e.g.\ the trivial one prescribing that the interactions between the nodes $N_s$ and $N_a$ are strictly unidirectional.

%
}

\section{The calculus \IoTLySa}\label{sec:semantics}

We model IoT applications through the process calculus \IoTLySa~\cite{BDFG_Coord16} that consists of 
(i) systems of nodes, in turn consisting of sensors, actuators and control processes, plus a shared store $\Sigma$ within each node for internal communications;
(ii) primitives for reading from sensors, and for triggering actuator actions;
(iii) an asynchronous multi-party communication modality among nodes, subject to constraints, e.g.~concerning
proximity or security;
(iv) functions to process data;
(v) explicit conditional statements.
Differently from~\cite{BDFG_Coord16}, we include here encryption and decryption primitives as in {\LySa}, taking a symmetric schema.
We assume as given a finite set $\mathcal K$ of secret keys owned by nodes, previously exchanged in a secure way, as it is often the case~\cite{ZigBee}.
Since our analysis at the moment considers  no malicious participants, this is not a limitation; we can anyway implicitly include attackers, following~\cite{BBDNN_JCS}.

Systems in \IoTLySa\ have a two-level structure and consist of a fixed number of uniquely labelled nodes, hosting a store, control processes, sensors and actuators.
The syntax follows:
\[
\begin{array}{ll@{\hspace{2ex}}l}
{\cal N} \ni N ::= & {\it systems \ of \ nodes} &\\
& \NIL                       & \hbox{inactive node} \\
& \ell: [B]                      & \hbox{single node}\;  (\ell \in {\cal L} \text{, the set of labels})
\\
& N_1\ |\ N_2 &  \hbox{parallel composition of nodes}
\\
{\cal B} \ni B ::= & \text {\it node components} &\\
&  \Sigma_\ell  & \hbox{node store}
\\
& P  & \hbox{process}
\\
& S   & \hbox{sensor, with a unique identifier $i \in \cal{I}_{\ell}$}
\\
& A     & \hbox{actuator, with a unique identifier $j \in \cal{J}_{\ell}$}
\\
& B \ \|\ B & \hbox{parallel composition of node components}
\end{array}
\]
%
The label $\ell$ uniquely identifies the node  
$\ell: [B]$ and may represent further characterising information (e.g.\ its location or other contextual information, see Sect.~\ref{sec:extendedCFA}).
Finally, the operator $|$  describes a system of nodes obtained by parallel composition.
 
We impose that in $\ell: [B]$  there is a \emph{single} store $ \Sigma_{\ell} : \cal{X} \cup \cal{I}_{\ell} \ \rightarrow \cal{V} $, where ${\cal X}, {\cal V}$  are the sets of variables and of values (integers, booleans, ...), resp.
Our store is essentially an array of fixed dimension, so intuitively a variable is the index in the array and an index $i \in \cal{I}_{\ell}$ corresponds to a single sensor (no need of $\alpha$-conversions).
We assume that store accesses are atomic, e.g.\ through CAS instructions~\cite{CAS91}.
The other node components are obtained by the parallel composition of control processes $P$, and of a fixed number of (less than 
$\#(\cal{I}_{\ell})$) sensors $S$, and actuators $A$ (less than 
$\#(\cal{J}_{\ell})$) the actions of which are in $Act$.
The syntax of processes is as follows:
\[
\begin{array}{lll}
{\cal P} \ni P ::= & {\it control \ processes} &\\
& \NIL                       & \hbox{inactive process} \\
& \OUTM{E_1, \cdots, E_k}{L}.\,P & \hbox{asynchronous multi-output L$\subseteq {\cal L}$} \\
& \INPS{E_1,\cdots,E_j}{x_{j+1},\cdots,x_k}{P}\  & \hbox{input (with matching)}\\
& \DECSO{E}{E_1,\cdots,E_j}
      {x_{j+1},\cdots,x_k}{k_0}{}{P} & \hbox{decryption with key $k_0$ (with match.)}
      \\
& E?P:Q &  \hbox{conditional statement} \\
& h   &  \hbox{iteration variable}
\\
&\mu h. \ P & \hbox{tail iteration}
  
\\[.2ex]
& x := E.\,P & \hbox{assignment to $x \in {\cal X}$}
\\
& \OUTS{j, \gamma}{P}& \hbox{output of action $\gamma$ to actuator $j$}
\end{array}
\]

\noindent
The prefix $\OUTM{E_1, \cdots, E_k }{L}$ implements a simple form of multi-party communication among nodes: the tuple $E_1, \dots, E_k$ is asynchronously sent to the nodes with labels in $L$ and that are ``compatible'' (according, among other attributes, to a proximity-based notion).
The input prefix $(E_1, \!\cdots\!,E_j; x_{j+1},\! \cdots \!,x_k)$
is willing to receive a $k$-tuple, provided that its first $j$ elements match the corresponding input ones, and then binds the remaining store variables (separated by a ``;'') to the corresponding values (see \cite{BNN03,BBF15} for a more flexible choice).
Otherwise, the $k$-tuple is not accepted.
A process repeats its behaviour, when defined through the tail iteration construct $\mu h. P$ (where $h$ is the iteration variable).
The process $\DECSO{E}{E_1,\cdots,E_j}{x_{j+1},\cdots,x_k}{k_0}{}{P}$
receives a message encrypted with the shared key $k_0 \in \mathcal K\!$.
Also in this case we use the pattern matching but additionally 
the message $E =\{E'_1, \cdots, E'_k\}_{k_0}$ is decrypted with the key $k_0$. 
Hence, whenever $E_i = E'_i$ for all $i\in [0,j]$, the receiving process behaves as $P\{E_{j+1}/x_{j+1},\ldots,E_{k}/x_{k}\}$.
Sensors and actuators have the form:
\[
\begin{array}{ll@{\hspace{2ex}}l ll@{\hspace{2ex}}l}
{\cal S} \ni S &::=  {\it sensors} & &
{\cal A} \ni A &::= {\it actuators} &\\
& \NIL &  \hbox{\hspace{-5mm}inactive sensor} &&
 \NIL &  \hbox{\hspace{-5mm}inactive actuator}
\\
& \tau.S & \hbox{\hspace{-5mm}internal action} &
& \tau.A & \hbox{\hspace{-5mm}internal action} 
\\
& i :=  v.\,{S} & \hbox{\hspace{-5mm}store of $v \in {\cal V}$} 
&& 
(\!|j, \Gamma|\!).\,A & \hbox{\hspace{-5mm}command for actuator $j$ ($\Gamma \subseteq Act$)} 
\\
&& \hbox{\hspace{-5mm}by the $i^{th}$ sensor}
 &&
\gamma.A & \hbox{\hspace{-5mm}triggered action ($\gamma \in Act$)}

\\
& h & \hbox{\hspace{-5mm}iteration var.} 
&&
h & \hbox{\hspace{-5mm}iteration var.} 
\\
& \mu \,h\,.\, S & \hbox{\hspace{-5mm}tail iteration}
&& \mu \,h\,.\, S & \hbox{\hspace{-5mm}tail iteration}
\end{array}
\]

\noindent
We recall that sensors and actuators are identified by 
unique identifiers.
A sensor can perform an internal action $\tau$ or store the value $v$, gathered from the environment, into its store location $i$.
An actuator can perform an internal action $\tau$ or execute one of its action $\gamma$,
possibly received from its controlling process.
Both sensors and actuators can iterate.
For simplicity, here we neither provide an explicit operation to read data from the environment, nor to describe the impact of actuator actions on the environment.
Finally, the syntax of terms is as follows:
\[
\begin{array}{ll@{\hspace{2ex}}l}
{\cal E} \ni E ::= & 
       {\it terms}&\\
& v  & \hbox{value } (v \in {\cal V})\\
& i  & \hbox{sensor location } (i \in {\cal I_\ell})\\
& x  & \hbox{variable } (x \in {\cal X})\\
& \{E_1,\cdots,E_k\}_{k_0}
              & \hbox{encryption with key }k_0\; (k\geq 0)
              \\
& f(E_1, \cdots, E_n) &  \hbox{function on data} \\
\end{array}
\]
The encryption function $\{E_1,\cdots,E_k\}_{k_0}$ returns the result of encrypting values $E_i$ for $i \in [1,k]$ under $k_0$ representing a shared key in $\mathcal K\!$.
The term $f(E_1, \cdots, E_n)$ is the application of function $f$ to $n$ arguments; 
we assume given a set of primitive functions, typically for aggregating or comparing values, be them computed or representing data in the environment.
We assume the sets $\mathcal V\!$, $\mathcal I\!$, $\mathcal J\!$, $\mathcal K\!$ be pairwise disjoint.



The semantics is based on a \emph{structural congruence} $\equiv$ on nodes, processes, sensors, and actuators.
It is standard except for the last rule that equates a multi-output with no receivers and the inactive process, and for the fact that inactive components of a node are all coalesced.

\[
\begin{array}{ll}
-\  & ({\mathcal N}/_{\equiv}, \mid, \NIL)  \text{ and } ({\mathcal B}/_{\equiv}, \|, \NIL)  \mbox{ are commutative monoids} 
\\
-\ &\mu \,h\,.\, X \equiv X\{\mu \,h\,.\, X/h\}  \quad
 \text{ for } X \in \{P, A, S\}
\\
-\  & \langle \langle E_1,\cdots,E_k \rangle \rangle : \emptyset . \ \NIL\equiv \NIL
\end{array}
\]

We have a two-level {\em reduction relation\/} $\rightarrow$ defined as the least relation on nodes and its components satisfying the set of inference rules in \tablename~\ref{opsem}.
We assume the standard denotational interpretation $\dsem{E}_\Sigma$ for evaluating terms.

\begin{table*}[htb]
{\small
\[
\begin{array}{l}
\begin{array}{ll}
\begin{array}{l}
  \mbox{(S-store)}\\      
  \irule{}{
                    \Sigma \ \| \ i := v.\,{S_i \ \|\ B
               \rightarrow
               \Sigma\{v/{i}\} \ \| \ S_i \ \| \ B}
              }
              \hfill
\end{array} 
\quad & \quad
\begin{array}{l}
 \mbox{(Asgm)}
 \\       \irule{ \dsem{E}_\Sigma = v}{
                     \Sigma \  \|\  x:=E.\,P \  \|\ B \ 
               {\rightarrow} \ 
              \Sigma\{v/x\}  \ \| \   P \ \| \ B 
              }
              \end{array}
\hfill
\end{array}
\\[6ex]

\begin{array}{l}
\mbox{(Ev-out)} \\ 
\irule{ \bigwedge_{i=1}^{k} {v_i}= \dsem{E_i}_{\Sigma} 
}{
\Sigma  \  \|\   \OUTM{E_1,\cdots,E_k}{L}.\,P \  \|\  B \ \rightarrow \ 
   \Sigma  \  \|\   \OUTM{v_1,\cdots, v_k}{L} . \NIL\  \|\  P \  \|\  B 
}
\end{array}
\hfill
\\[7ex]

\mbox{(Multi-com)}\\ 
\irule{\ell_2 \in L \wedge \ Comp(\ell_1,\ell_2)  \wedge \  \bigwedge_{i=1}^{j} {v_i}= \dsem{E_i}_{\Sigma_2} }
             {
             \begin{array}{c}
             \ell_1: [\OUTM{v_1,\cdots,v_k}{L}. \, \NIL \  \|\  B_1] \ \ | \ \
             \ell_2:[\Sigma_2 \ \| \ (E_1,\cdots,E_j;x_{j+1},\cdots,x_k).Q \ \| \ B_2] \ \rightarrow
              \\[.3ex]
              \ell_1: [\OUTM{v_1,\cdots,v_k}{L \setminus \{\ell_2\}}. \, \NIL \ \| \ B_1]
            \  \ | \ \ \ell_2: [\Sigma_2\{v_{j+1}/x_{j+1},\cdots,v_k/x_k\} \ \| \ Q\ \| \ B_2]
              \end{array}}
\\[8ex]
 \mbox{(Decr)} \\       \irule{
 \dsem{E}_{\Sigma} = \{v_1,\cdots,v_k\}_{k_0} \ \wedge \
  \bigwedge_{i=1}^{j} {v_i}= \dsem{E'_i}_{\Sigma}
\hspace{3ex}
}{
{\Sigma \ \| \ \DECSO{E}{E'_1,\cdots,E'_j}{x_{j+1},\cdots,x_k}{k_0}{}{P}} \| \ B
\ARROW \Sigma \| \  P[v_{j+1}/x_{j+1},\cdots,v_k/x_k] \ \| \ B \hfill
}
              \hfill
\\[5ex]
\begin{array}{ll}
\begin{array}{l}

 \mbox{(Cond1)} \\       \irule{ \dsem{E}_\Sigma = \tt{true}} {
                     \Sigma \  \|\  E?\,P_1:P_2 \  \|\ B \ 
              \rightarrow \ 
              \Sigma  \ \| \   P_1 \ \| \ B 
              }
              \end{array}
       &
       \begin{array}{l}

   \mbox{(Cond2)} \\       \irule{ \dsem{E}_\Sigma = \tt{false}} {
                     \Sigma \  \|\  E?\,P_1:P_2 \  \|\ B \ 
              \rightarrow \ 
              \Sigma  \ \| \   P_2 \ \| \ B 
              }
\end{array}
 \end{array}           
\\[5ex]
\begin{array}{ccc}
\begin{array}{l}
							\mbox{(A-com)}\\ 
\irule{\gamma \in \Gamma}{
                    \OUTS{j,\gamma}{P \  \|\   (\!|j,\Gamma|\!).\,A \ \| \ B \
             }
             \rightarrow
              \ P \ \| \ \gamma.\,A\  \|\ B 
              }
              \end{array}
\quad & \quad
 \begin{array}{l}
 \mbox{(Act)} \\ 
\irule{
}{
\gamma.A \ \rightarrow\  A
}
\end{array}
 \quad & \quad
 \begin{array}{l}
 \mbox{(Int)} \\   
    \irule{}{ \tau.\,X \ \rightarrow\  X} \ 
    \end{array}
\end{array}
              \hfill
\\[8ex]


\begin{array}{llll}
\begin{array}{l}
\mbox{(Node)}\\ 
\irule{B \ \rightarrow \ B'}
             {\ell: [B] \ \rightarrow \ \ell:[B']
             }
             \end{array}
\   &  \ 
\begin{array}{l}
\mbox{(ParN)}\\ 
       \irule{N_1 \rightarrow N'_1}
             {{N_1| N_2} \rightarrow {N'_1| N_2}}
			 \end{array}\ 
			 
\   &  \ 
\begin{array}{l}
\mbox{(ParB)}\\ 
       \irule{B_1 \rightarrow B'_1}
             {{B_1\| B_2} \rightarrow {B'_1\| B_2}}
			 \end{array}\ 
			 
\   &  \
\begin{array}{l}
\mbox{(CongrY)}\\ 
       \irule{Y_1' \equiv Y_1 \rightarrow Y_2 \equiv Y'_2}
             {Y'_1 \ARROW Y'_2}
			 \end{array}\ 
\end{array}
\hfill

\end{array}
\\ \\
\]}
\caption{Reduction semantics, where $X \in \{S, A\}$ and $Y \in \{N, B\}$.}
\label{opsem}
\end{table*}

%
The first two rules implement the (atomic) asynchronous update of shared variables inside nodes, by using
the standard notation $\Sigma\{-/-\}$.
According to (S-store), the $i^{th}$ sensor uploads the value $v$, gathered from the environment, into the store location $i$. 
According to (Asgm), a control process updates the variable $x$ with the value of $E$.
The rules (Ev-out) and (Multi-com) drive asynchronous multi-communications among nodes.
In the first 
a node labelled $\ell$ willing to send a tuple of values $\mess{v_1,...,v_k}$, obtained by the evaluation 
of $\mess{E_1,...,E_k}$,
spawns a new process, running in parallel with the continuation $P$;
its task is to offer the evaluated tuple to all its receivers $L$.
In the rule (Multi-com), the message coming from $\ell_1$ is received by a node labelled $\ell_2$. 
The communication succeeds, provided that: (i) $\ell_2$ belongs to the set $L$ of possible receivers, 
(ii) the two nodes are compatible according to the compatibility function $Comp$,
and (iii) that the first $j$ values match with the evaluations of the first $j$ terms in the input. 
Moreover, the label $\ell_2$ is removed by the set of receivers $L$ of the tuple. 
The spawned process terminates when all its receivers have received the message (see the last congruence rule).
The role of the compatibility function $Comp$ is crucial in modelling real world constraints on communication.
A basic requirement is that inter-node communications are proximity-based, i.e.~only nodes that are in the same transmission range can directly exchange messages.
Of course, this function could be enriched in order to consider finer notions of compatibility.
This is easily encoded here by defining a predicate (over node labels) yielding true only when two nodes respect the given constraints.
The inference rule (Decr) expresses the result of matching
the term
$ \{E_1,\cdots,E_k\}_{k_0}$,
resulting from an encryption, against the pattern in
$\mathsf{decrypt}\;{E}\;\mathsf{as}\; \{E'_1,\cdots,E'_j ;
x_{j+1},\cdots,x_k\}_{k_0}$
$\mathsf{in}\;{P}$, i.e.~the pattern occurring in the corresponding
decryption.  
As for communication, the value $v_i$ of each ${E_i}$
must match that of the corresponding ${E'_i}$ for the first $j$
components and in addition the keys must be the same (this models \emph{perfect}
symmetric cryptography). 
When successful, the values of the remaining expressions are bound to the corresponding variables.

According to the evaluation of the expression $E$, the rules for conditional are as expected.
A process commands the $j^{th}$ actuator $(\!|j,\Gamma|\!).\,A$ through the rule (A-com), by sending it the pair $\langle j, \gamma \rangle$; $\gamma$ prefixes the actuator, 
if it is one of its actions.
The rule (Act) says that the actuator performs the action $\gamma$. 
Similarly,
for the rules (Int) for internal actions for representing activities we are not interested in.
The last rules propagate reductions across parallel composition ((ParN) and (ParB)) and nodes (Node), while the
(CongrY) are the standard reduction rules for congruence.

\section{Control flow analysis}\label{sec:analysis}

We now revisit the CFA in~\cite{BDFG_Coord16} that only tracks the ingredients of the data handled by IoT nodes; more precisely it predicts where sensor values are gathered and how they are propagated inside the network be they raw data or resulting from computation.
Then, we extend it to also predict the actions of actuators, by including a new component $\alpha$, that for every actuator $j$ collects the actions $\gamma$ that may be triggered by the control process 
in the node labelled $\ell$.
Our CFA aims at safely approximating the abstract behaviour of a system of nodes $N$.  
We conjecture that the introduction of the component $\alpha$ does not change the low polynomial complexity of the analysis of~\cite{BDFG_Coord16}.

We resort to abstract values for sensor and  functions on abstract values, as follows, where $\ell \in \cal L$:
%
%
%
%
\[
\begin{array}{ll@{\hspace{2ex}}l}
\widehat{\cal V} \ni \hat{v} ::= & 
       {\it abstract\ terms \ }  &\\
& \top^{\ell}  & \hbox{special abstract value denoting cut} \\
& i^{\ell}  & \hbox{sensor abstract value } (i \in {\cal I}_\ell)\\
& v^{\ell}  & \hbox{node abstract value } \\
&\{\hat{v}_1, \cdots, \hat{v}_n\}^{\ell}_{k_0} &  \hbox{encryption on abstract data} \\
& f^\ell(\hat{v}_1, \cdots, \hat{v}_n) &  \hbox{function on abstract data} \\
\end{array}
\]
Since the dynamic semantics may introduce function terms with an arbitrarily nesting level, we have new special abstract values $\top^{\ell}$ that denote all those function terms with a depth greater that a given $d$.
In the CFA clauses, we will use $\lfloor - \rfloor_d$ to keep the maximal depth of abstract terms less or equal to $d$ (defined as expected).
Once given the set of functions $f$ occurring in $N$, the abstract values are finitely many.

{\small
\begin{table*}[!tb]

 \[
 \begin{array}{c}
 \irule{{i^{\ell}} \in \vartheta \subseteq  \Theta(\ell)}{\nEFORM{i}{\vartheta}}
\qquad \qquad
\irule{{v^{\ell}} \in \vartheta \subseteq  \Theta(\ell)}{\nEFORM{v}{\vartheta}}
\qquad \qquad
\irule{\hat{\Sigma}{_\ell}({x}) \subseteq \vartheta \subseteq  \Theta(\ell)} {\nEFORM{x}{\vartheta} }
\\[4ex] 
\irule{\begin{array}{c} \bigwedge_{i=1}^k\,
\nEFORM{E_i}{\vartheta_i} \; \wedge \\[.5ex] 
\forall\, \hat{v}_1,..,\hat{v}_k: 
\bigwedge_{i=1}^k\, \hat{v}_i \in \vartheta_i\
\\
 \Rightarrow
\begin{array}{l}
\  \lfloor \{\hat{v}_1,..,\hat{v}_k\}_{k_0}^{\ell} \rfloor_{d}  \in \vartheta 
\\
\ \wedge \ \vartheta \subseteq  \Theta(\ell)
\end{array}
          \end{array}}
         {\nEFORM{\{E_1,..,E_k\}_{k_0}}{\vartheta}}
\ \ \
\irule{\begin{array}{c} \bigwedge_{i=1}^k\,
\nEFORM{E_i}{\vartheta_i} \; \wedge \\[.5ex] 
\forall\, \hat{v}_1,..,\hat{v}_k: 
\bigwedge_{i=1}^k\, \hat{v}_i \in \vartheta_i\ 
\\
\Rightarrow
\begin{array}{l}
\ \lfloor f^\ell(\hat{v}_1,..,\hat{v}_k) \rfloor_{d} \in \vartheta 
\\
\ \wedge \ \vartheta \subseteq  \Theta(\ell)
\end{array}
          \end{array}}
         {\nEFORM{f(E_1,..,E_k)}{\vartheta}}
\end{array} 
\] 
\caption{Analysis of terms $\nEFORM{E}{\vartheta}$.}\label{analysisT}
\end{table*}
}
The result of our CFA is a quadruple $(\hat{\Sigma},\kappa,\Theta,\alpha)$
(a pair $(\hat{\Sigma}, \Theta)$ when analysing a term $E$, resp.),
called \emph{estimate} for $N$ (for $E$, resp.), that satisfies the
judgements defined by the axioms and rules of Tables~\ref{analysis} (and~\ref{analysisT}).
For this we introduce the following {\em abstract domains}:
\begin{itemize}
\item
{\it abstract store}
$
\hat{\Sigma} = \bigcup_{\ell \in \cal L}\,\hat{\Sigma}{_\ell}: {\cal X} \cup {\cal I}_\ell \ \rightarrow\  2^{\widehat{\cal V}}
$
where each {\em abstract local store} $\hat{\Sigma}_{\ell}$ approximates the concrete local store $\Sigma_{\ell}$, 
by associating with each location a set of abstract values that represent the possible concrete values that the location may stored at run time.
\item {\it abstract network environment} 
$\kappa : {\cal L} \ \rightarrow\  {\cal L} \times \bigcup_{i = 1}^k \widehat{\cal V}^i$ (with
$\widehat{\cal V}^{i+1} = \widehat{\cal V} \times \widehat{\cal V}^i$ and $k$ maximum arity of messages), that includes all the messages that may be received by the node 
$\ell$.
\item
{\em abstract data collection} $\Theta: {\cal L} \ \rightarrow\ 2^{\widehat{\cal V}}$
that, for each node $\ell$, approximates the set of values that the node computes.
\item
{\em abstract node action collection} $\alpha: {\cal L} \times {\cal J} \rightarrow Act$ that includes all the annotated actions $\gamma$ of the
actuator $j$ that may be triggered in the node $\ell$. We will write $\alpha_{_{\ell}}(j)$ in place of $\alpha(\ell,j)$.
\end{itemize}

\noindent
For each term $E$, the judgement $\nEFORM{E}{\vartheta}$, defined by the rules in \tablename~\ref{analysisT}, expresses that $\vartheta \in  \widehat{\cal V}$ is an acceptable estimate of the set of values that $E$ may evaluate to in $\hat{\Sigma}_\ell$.
A sensor identifier and a value evaluate to the set $\vartheta$, provided that their abstract representations belong to $\vartheta$.
Similarly a variable $x$ evaluates to $\vartheta$, if this includes the set of values bound to $x$ in $\hat{\Sigma}_{\ell}$.
The rule for analysing the encryption term produces the set $\vartheta$.
To do that
(i) for each term $E_i$, it finds the sets $\vartheta_i$, and
(ii) for all $k$-tuples of values $(\hat{v}_1,\cdots,\hat{v}_k)$ in $\vartheta_1\times\cdots\times\vartheta_k$, it checks if
the abstract values $\{\hat{v}_1,\cdots,\hat{v}_k\}_{k_0}^{\ell}$ belong to $\vartheta$.
The last rule analyses the application of a $k$-ary function $f$ to produce the set $\vartheta$.
To do that
(i) for each term $E_i$, it finds the sets $\vartheta_i$, and
(ii) for all $k$-tuples of values $(\hat{v}_1,\cdots,\hat{v}_k)$ in $\vartheta_1\times\cdots\times\vartheta_k$, it checks if
the abstract values $f^{\ell}(\hat{v}_1,\cdots,\hat{v}_k)$ belong to $\vartheta$.
Recall that the special abstract value $\top^{\ell}$ will end up in $\vartheta$ if the depth of the abstract functional term or the encryption term exceeds $d$, and it represents all the functional terms with nesting greater than $d$.
%

Moreover, in all the rules for terms, we require that $\Theta(\ell)$ includes all the abstract values included in $\vartheta$.
This guarantees that only those values actually used are tracked by $\Theta$, in particular those of sensors.

In the analysis of nodes we focus on which values can flow on the network and which can be assigned to variables.
The judgements have the form $\nNAFORM{N}$
and are defined by the rules in \tablename~\ref{analysis}. 
The rules for the \emph{inactive node} and for \emph{parallel composition} are standard.
Moreover, the rule for a single node $\ell:[B]$ requires that its component $B$ is analysed, with the further judgment
$\nPAFORM{\ell}{B}$, where $\ell$ is the label of the enclosing node.
The rule connecting actual stores $\Sigma$ with abstract ones $\hat{\Sigma}$ requires the locations of sensors to contain the corresponding abstract values.
The rule for sensors is  trivial, because we are only interested in who will use their values, and that for actuators is equally simple.
{\small
\begin{table*}[!bt]
 \[
 \begin{array}{c}
\irule{}
{\nNAFORM{\NIL}}
\qquad
\irule{\nPAFORM{\ell}{B}}
{\nNAFORM{\ell:[B]}}
\qquad
\irule{\nNAFORM{N_1} \wedge \nNFORM{N_2}}
      {\nNAFORM{N_1 \ | \ N_2}}
\\[5ex]
\irule{\forall\, i \in {\cal I_\ell}. \ i^\ell \in \hat{\Sigma}_\ell(i)}{\nPAFORM{\ell}{\Sigma}}
         \qquad\qquad
\irule{}{\nPAFORM{\ell}{S}}
          \qquad\qquad
\irule{}{\nPAFORM{\ell}{A}}
\\[4ex]
\irule{\begin{array}{c}
          \bigwedge_{i=1}^{k}\; \nEFORM{E_i}{\vartheta_i} \  \wedge \ 
          \nPAFORM{\ell}{P} \ \wedge
          \\[.5ex]
          \forall \hat{v}_1,\cdots,\hat{v}_k:\; \bigwedge_{i=1}^k\, \hat{v}_i \in \vartheta_i\
          \Rightarrow
          \forall \ell' \in L:
         (\ell,  \mess{\hat{v}_1,\cdots,\hat{v}_k}) \in  \kappa(\ell')           \end{array}}
         {\nPAFORM{\ell}{\OUTM{E_1,\cdots,E_k}{L}.\,P}}
\\[4ex]
\irule{
\begin{array}{c}
              \bigwedge_{i=1}^{j}\;  \nEFORM{E_i}{\vartheta_i}  \ \wedge \\[.5ex]
          \forall (\ell', \mess{ \hat{v} _1,\cdots,\hat{v}_k}) \in \kappa(\ell):\;Comp(\ell',\ell) 
\Rightarrow
\left(          \bigwedge_{i=j+1}^{k}\;  \hat{v}_i \in \hat{\Sigma}{_\ell}({x_i})\ \wedge \
           \nPAFORM{\ell}{P} 
\right)
\end{array}
          }
          {\nPAFORM{\ell}{\INPS{E_1,\cdots,E_j}{x_{j+1},\cdots,x_k}{P}}}
\\[6ex]
\irule{
\begin{array}{c}
             \nEFORM{E}{\vartheta} \  \wedge \ \bigwedge_{i=1}^{j}\;  \nEFORM{E_i}{\vartheta_i} \  \wedge \\[.5ex]
           \{\hat{v}_1, \cdots, \hat{v}_k \}^{\ell}_{k_0} \in \vartheta  \Rightarrow
\left(
           \bigwedge_{i=j+1}^{k}\;  \hat{v}_i \in \hat{\Sigma}{_\ell}({x_i})\ \wedge \
           \nPAFORM{\ell}{P} 
\right)
\end{array}
          }
          {\nPAFORM{\ell}{{\DECSO{E}{E_1,\cdots,E_j}{x_{j+1},\cdots,x_k}{k_0}{}{P}}}}
\\[5ex]
\begin{array}{ll}
\irule{
\begin{array}{c} 
\nEFORM{E}{\vartheta} \  \wedge \\ \forall \, \hat{v} \in \vartheta \  \Rightarrow
 \hat{v}  \in \hat{\Sigma}_{_\ell}(x) \ \wedge
\nPAFORM{\ell}{P}
\end{array}}
      {\nPAFORM{\ell}{x : = E}.\,{P}}
\qquad
&
\irule{{\gamma \in \alpha_\ell(j) \wedge \nPAFORM{\ell}{P}}}
      {\nPAFORM{\ell}{ \OUTS{j,\gamma}{P}
}}
\end{array}

\\[6ex]
\irule{\begin{array}{c} 
\nEFORM{E}{\vartheta} \; \wedge 
\\
           \nPAFORM{\ell}{P_1} \; \wedge \
           \nPAFORM{\ell}{P_2}
           \end{array}
         }
          {\nPAFORM{\ell}{E?P_1 : P_2}}
\quad
\irule{\nPAFORM{\ell}{B_1} \wedge \nPAFORM{\ell}{B_2}}{\nPAFORM{\ell}{B_1 \| \ B_2}}
\\[4ex]
\irule{}{\nPAFORM{\ell}{\NIL}}
\qquad\qquad
\irule{\nPAFORM{\ell}{\lfloor \mu h. \ P \rfloor_d}}{{\nPAFORM{\ell}{\mu h. \ P}}}
\qquad\qquad
\irule{}
{\nPAFORM{\ell}{h}}
\end{array} 
\] 
\caption{Analysis of nodes $\nNAFORM{N}$, and of node components
$\nPAFORM{\ell}{B}$.}\label{analysis}
\end{table*}
}

The axioms and rules for processes (in \tablename~\ref{analysis}) require that an estimate is also valid for the immediate sub-processes.
The rule for $k$-ary \emph{multi-output}
(i)  finds the sets $\vartheta_i$, for each term $E_i$; and 
(ii) for all $k$-tuples of values $(\hat{v}_1,\cdots,\hat{v}_k)$ in $\vartheta_1\times\cdots\times\vartheta_k$, it checks if
they belong to $\kappa(\ell' \in L)$, i.e.\ they can be received by the nodes with labels in $L$.
In the rule for \emph{input} the terms $E_1, \cdots, E_j$ are 
used for matching values sent on the network. 
Thus, this rule checks whether 
(i) these first $j$ terms have acceptable estimates $\vartheta_{i}$; 
(ii) and for any
message $(\ell', \mess{\hat{v}_1,\cdots,\hat{v}_j,\hat{v}_{j+1},\ldots,\hat{v}_k})$ in $\kappa(\ell)$ 
(i.e.~in any message predicted to be receivable by the node $\ell$), such that the two nodes can communicate ($Comp(\ell',\ell)$),
the values $\hat{v}_{j+1},\ldots,\hat{v}_k$
are included in the estimates for the variables $x_{j+1},\cdots,x_k$.
The rule for {\em decryption} is similar: it also requires that the keys coincide.
The rule for {\em assignment}
requires that all the values $\hat{v}$ in $\vartheta$, the estimate for $E$, belong to $\hat{\Sigma}_{_\ell}(x)$.
The next rule predicts in the component $\alpha$ that a process at node $\ell$ may trigger the action $\gamma$ for the actuator $j$.
The rule for $\mu h.\, P$ reflects our choice of limiting the depth of function applications: the iterative process is unfolded $d$ times, represented by $\lfloor \mu h. \ P \rfloor_d$.
The remaining rules are as expected.


Consider the example in Sect.~\ref{sec:example} and the process 
$P_{cp} = \mu h. (z := 1).(z' := noiseRed(z)).\OUTM{z'}{\{\ell_a\}}. \ h$.
Every valid CFA estimate must include at least the following entries (assuming the depth $d = 4$):
\[
\begin{array}{l}
(a)\, \hat{\Sigma}_{\ell_{cp}}(z) \supseteq \{1^{\ell_{cp}} \}
\qquad\qquad\qquad\qquad\ \ 
(b)\, \hat{\Sigma}_{\ell_{cp}}(z') \supseteq \{noiseRed^{^{\ell_{cp}}}(1^{\ell_{cp}}),1^{\ell_{cp}} \}
\\
(c)\, \Theta(\ell_{cp}) \supseteq \{1^{\ell_{cp}},noiseRed^{^{\ell_{cp}}}(1^{\ell_{cp}}) \} 
\quad
(d)\, \kappa(\ell_a) \supseteq \{ (\ell_{cp},\mess{noiseRed^{^{\ell_{cp}}}(1^{\ell_{cp}})})\}
\end{array}
\]
All the following checks must succeed:
$\nPAFORM{\ell_{cp}}{\mu h. (z := 1).(z' := noiseRed(z)).\OUTM{z'}{\{\ell_a\}}.h}$ because
 $\nPAFORM{\ell_{cp}}{ (z := 1).(z' := noiseRed(z)).\OUTM{z'}{\{\ell_a\}}}$, that in turn holds
because (i) $1^{\ell_{cp}}$ is in $\hat{\Sigma}_{\ell_{cp}}(z)$ by ({\it a}) ($\nEFORM{1}{\vartheta} \ni 1^{\ell_{cp}}$); and because 
(ii) $\nPAFORM{\ell_{cp}}{ (z' := noiseRed(z)).\OUTM{z'}{\{\ell_a\}}}$, that holds;
because (i) $noiseRed^{^{\ell_{cp}}}(1^{\ell_{cp}})$ is in $\hat{\Sigma}_{\ell_{cp}}(z')$ by ({\it b}) since
  $\nEFORMM{\ell_{cp}}{noiseRed(z)}{\vartheta}$ with $noiseRed^{^{\ell_{cp}}}(1^{\ell_{cp}}) \in \vartheta$; and because
(ii) $\nPAFORM{\ell_{cp}}{\OUTM{z'}{\{\ell_a\}}}$ that holds because $\kappa(\ell_a)$ includes
$(\ell_{cp},\mess{noiseRed^{^{\ell_{cp}}}(1^{\ell_{cp}})})$ by ({\it d}).
By considering instead the process $P_{p,1}$ below we have that $\alpha_{\ell_p}(j)$ includes $\mathsf{turnon},\mathsf{turnoff}$.
\begin{align*} 
P_{p,1} = \mu h. & (x_1:= 1.\, x_2:= 2.\, x_3:= 3.\, x_4:= 4). \\
& (x_4 = true)\ ?\  \\
& \qquad (x_1 \leq th_1 \land x_2 \leq th_2)\ ?\ \\
 &  \qquad \qquad \qquad \qquad\ \  \ (x_3 > th_3)\ ?\ \OUTS{5,\mathsf{turnon}}{\OUTM{x_4}{L_p}}.\ h\\
 &  \qquad \qquad \qquad \qquad \qquad \qquad \quad \ \ : \ \OUTM{\mathsf{err}, \ell_p}{\{\ell_s\}}. \ h \\
  &  \qquad \qquad \qquad \qquad \qquad \qquad : h \\
    & \ \ \ \ \qquad \qquad: \  \OUTS{5,\mathsf{turnoff}}{h}
\end{align*}
\paragraph{Correctness of the analysis.}

Our analysis respects the operational semantics of \IoTLySa.
As usual, we can prove a subject reduction result for our analysis and the existence of a (minimal) estimate.
The proofs benefit from an instrumented denotational semantics for expressions, the values of which are pairs $\langle v, \hat{v}\rangle$.
Consequently, the store ($\Sigma^i_\ell$ with a $\bot$ value) and its update are accordingly extended.

Just to give an intuition, we will have $\dsem{v}_{\Sigma^i_\ell}^i  =  (v, v^\ell)$, and the assignment $x : = E$ will result in the updated store $\Sigma^i_\ell \{(v, v^\ell)/x\} $, where $E$ evaluates to $(v, v^\ell)$.
Clearly, the semantics used in \tablename~\ref{opsem} is $\dsem{v}^i _{\downarrow_1}$, the projection on the first component of the instrumented one.
Back to the example of Sect.~\ref{sec:example}, the assignment $z' := noiseRed(z)$ of the process $P_{cp}$ stores the pair $(v, noiseRed^{\ell_{cp}}(1^{\ell_{cp}}))$, where the first component is the actual value resulting from the function application, and the second is its abstract counterpart (note that the sensor value is abstracted as $1^{\ell_{cp}}$).


Since the analysis only considers the second component of the extended store, it is immediate defining when the concrete and the abstract stores agree:
$\Sigma_\ell^i \bowtie \hat{\Sigma}_{\ell}$ if and only if
$w \in \cal{X} \cup \cal{I} _\ell$ such that $\Sigma^i_\ell (w) \neq \bot$
implies
$(\Sigma^i_\ell (w))_{\downarrow_2} \in  \hat{\Sigma}_\ell(w)$.

The following theorems (whose proofs follow the usual schema), which correspond to the ones in~\cite{BDFG_Coord16}, establish the correctness of our CFA.
Also, we can prove the existence of a minimal estimate, which depends on the fact that the set of
analysis estimates can be partially ordered and constitutes a Moore family.

\begin{restatable}[Subject reduction]{theorem}{subjectreduction}\label{SRtheorem}
\  \\
If $\nNAFORM{N}$ and $N \rightarrow N'$ and $\,\forall\Sigma_\ell^i$ in $N$ it is $\Sigma_\ell^i \bowtie \hat{\Sigma}_{\ell}$, 
then $\nNAFORM{N'}$ and $\,\forall \Sigma{_\ell^i}'$ in $N'$ it is $\Sigma{_\ell^i}' \bowtie \hat{\Sigma_\ell}$.
\end{restatable}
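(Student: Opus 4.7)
The plan is to proceed by induction on the derivation of $N \to N'$, simultaneously maintaining the validity of the CFA judgement and the store-agreement invariant $\Sigma^i_\ell \bowtie \hat{\Sigma}_\ell$. Since the analysis estimate $(\hat{\Sigma},\kappa,\Theta,\alpha)$ is fixed throughout the reduction and must remain valid for $N'$, this is purely a preservation argument: no new abstract components are introduced, only the existing estimate is reused.

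Before the main induction I would establish three auxiliary results. First, a \emph{congruence lemma}: if $N \equiv N'$ then $\nNAFORM{N}$ iff $\nNAFORM{N'}$, with analogous statements at the component level for $B$, $S$, and $A$. The only non-trivial case is the unfolding axiom $\mu h. X \equiv X\{\mu h. X/h\}$, which is handled using the rule $\nPAFORM{\ell}{\lfloor \mu h. P \rfloor_d}$ together with the fact that $\lfloor - \rfloor_d$ keeps an unfolded copy of $\mu h. P$ at its leaves. Second, a \emph{term-correctness lemma}: if $\Sigma^i_\ell \bowtie \hat{\Sigma}_\ell$, $\nEFORM{E}{\vartheta}$, and the instrumented evaluation of $E$ yields a pair $(v, \hat{v})$, then $\hat{v} \in \vartheta$, with truncation to $\top^\ell$ absorbed by the $\lfloor - \rfloor_d$ step in the clauses for encryption and function application. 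Third, a \emph{substitution lemma}: if $\nPAFORM{\ell}{P}$ and each $\hat{v}_i \in \hat{\Sigma}_\ell(x_i)$, then $\nPAFORM{\ell}{P\{\hat{v}_1/x_1,\ldots,\hat{v}_k/x_k\}}$; this is proven by structural induction on $P$, using that every binder of $x_i$ in the analysis rules already forces the abstract values of the relevant expressions into $\hat{\Sigma}_\ell(x_i)$.

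The main induction then considers each operational rule. For (S-store) and (Asgm), the clauses $\nPAFORM{\ell}{\Sigma}$ and the assignment rule guarantee that the stored abstract value is already in $\hat{\Sigma}_\ell$, so both estimate validity and $\bowtie$ are preserved. For (Ev-out) and (Multi-com), the output rule inserts the tuple into $\kappa(\ell')$ for every $\ell' \in L$, and the input rule, together with the $Comp$ side-condition and the substitution lemma, guarantees that the continuation $Q$ after rebinding is still accepted. The case (Decr) combines these arguments with the encryption clause of Table~\ref{analysisT}, exploiting that a matching abstract encryption $\{\hat{v}_1,\ldots,\hat{v}_k\}_{k_0}^{\ell'}$ is already present in the estimate of the decrypted expression. (Cond1), (Cond2), (A-com), (Act), and (Int) follow directly from the analysis rule of the corresponding construct, noting that $\gamma \in \alpha_\ell(j)$ is already recorded by the output-to-actuator rule. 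The structural rules (Node), (ParN), (ParB) follow by the inductive hypothesis, and (CongrY) by the congruence lemma.

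The main obstacle is the interaction between iteration $\mu h. P$ and the growth of abstract function terms: under the concrete semantics, repeated unfoldings together with successive function applications can generate terms of arbitrarily deep nesting, while the analysis only tracks terms up to depth $d$, collapsing the rest to $\top^\ell$. The subtle point is to verify that in every residual occurrence of $\mu h. P$ in $N'$ the depth-bounded unfolding $\lfloor \mu h. P \rfloor_d$ still yields an acceptable estimate against the evolved store, which amounts to establishing that $\top^\ell$ acts as a sound upper bound for any deeper abstract term that could subsequently arise. Once this soundness of the cut is pinned down, the remaining cases are routine mechanical verifications.
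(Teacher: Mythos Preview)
Your proposal is correct and is precisely the ``usual schema'' the paper invokes: the paper does not spell out the proof of Theorem~\ref{SRtheorem} but only states that it follows the standard pattern (induction on the transition derivation, with auxiliary lemmas for congruence and for the soundness of term evaluation), referring to the companion paper for details. One small imprecision worth fixing: in rule (Multi-com) the received values update the \emph{store} $\Sigma_2$ rather than being substituted into $Q$, so that case is discharged by the store-agreement invariant $\bowtie$ together with the input clause $\hat{v}_i \in \hat{\Sigma}_{\ell}(x_i)$, not by your substitution lemma; the substitution lemma is only needed for (Decr), and there it should be phrased for concrete (instrumented) values rather than abstract ones.
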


\noindent
The following corollary of subject reduction shows that our CFA 
guarantees that $\kappa$ predicts all the possible inter-node communications.

\begin{restatable}{corollary}{corTheta}\label{cor:Theta}\ \\
Let $N \xrightarrow{\mess{v_1,\dots,v_n}}_{\ell_{1}, \ell_{2}} N'$ denote a reduction in which the message sent by node $\ell_1$ is received by node $\ell_2$.
If $\nNAFORM{N}$ and $N \xrightarrow{\mess{v_1,\dots,v_n}}_{\ell_1,\ell_2} N'$ then
it holds $(\ell_{1},\mess{\hat{v}_1,\dots,\hat{v}_n}) \in \kappa(\ell_{2})$, 
where $\hat{v_i} = v_{i\downarrow_2}$.
\end{restatable}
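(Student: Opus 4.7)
The plan is to reduce the claim to inspecting the single application of (Multi-com) that realises the $\ell_1\to\ell_2$ communication, and then to match that rule against the CFA clause for multi-output. Subject reduction (Theorem~\ref{SRtheorem}) ensures that the estimate $(\hat{\Sigma},\kappa,\Theta,\alpha)$ remains valid along every intermediate structural and small step, so at the moment (Multi-com) fires we may still read its premises against the current configuration.

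First I would peel off the contextual and congruence wrappers ((ParN), (ParB), (Node), (CongrY)) to isolate the (Multi-com) redex. By the shape of the rule in Table~\ref{opsem}, $N$ then contains, in parallel, sub-nodes of the form
\[ \ell_1 : [\OUTM{v_1,\dots,v_n}{L}.\,\NIL \ \| \ B_1] \quad\text{and}\quad \ell_2 : [\Sigma_2 \ \| \ (E_1,\dots,E_j;x_{j+1},\dots,x_n).Q \ \| \ B_2], \]
with $\ell_2\in L$ and $Comp(\ell_1,\ell_2)$. The spawned output literal $\OUTM{v_1,\dots,v_n}{L}.\,\NIL$ itself originates from a previous (Ev-out) step on some prefix $\OUTM{E_1,\dots,E_n}{L}.P$; alternatively one can analyse the current output directly, since the CFA rule for multi-output applies uniformly to terms that happen to be values.

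Second, since the current estimate is valid at $\ell_1$, the CFA rule for multi-output in Table~\ref{analysis} forces, for every $\hat{w}_i\in\vartheta_i$,
\[ \forall \ell'\in L:\ (\ell_1,\mess{\hat{w}_1,\dots,\hat{w}_n})\in\kappa(\ell'), \]
where $\vartheta_i$ is an acceptable estimate for the $i$-th argument term. A standard \emph{correctness of term evaluation} lemma (a routine induction over the rules of Table~\ref{analysisT}), combined with the hypothesis $\Sigma_{\ell_1}^i\bowtie\hat{\Sigma}_{\ell_1}$, yields that the instrumented abstract value $\hat{v}_i=(v_i)_{\downarrow_2}$ belongs to $\vartheta_i$. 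Instantiating the implication at these $\hat{v}_i$ and using $\ell_2\in L$ gives the desired $(\ell_1,\mess{\hat{v}_1,\dots,\hat{v}_n})\in\kappa(\ell_2)$.

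The step I expect to be the main obstacle is the term-evaluation lemma that links the instrumented value $(v_i)_{\downarrow_2}$ with any acceptable $\vartheta_i$: the truncation $\lfloor\cdot\rfloor_d$ used on nested encryptions and function applications, together with the $\top^\ell$ cut, must be shown to commute appropriately with evaluation, so that the abstract term actually produced by the instrumented semantics is always contained in the predicted set. Once this lemma is available (it is precisely the ingredient underlying Theorem~\ref{SRtheorem}), the corollary is a direct reading of (Multi-com) against the CFA clause for multi-output, with no further induction on the structure of $N$.
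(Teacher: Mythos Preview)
Your proposal is correct and is exactly the natural unfolding of what the paper leaves implicit: the paper gives no proof at all for this result, stating it only as an immediate corollary of Theorem~\ref{SRtheorem}. Your argument --- isolate the (Multi-com) redex through the contextual rules, invoke the CFA clause for multi-output at $\ell_1$, and use the term-evaluation correspondence (the same lemma underpinning subject reduction) to place the instrumented abstract components $\hat v_i$ in the relevant $\vartheta_i$ --- is precisely the content hidden behind the word ``corollary''. One small remark: since the hypothesis already gives $(\hat\Sigma,\kappa,\Theta,\alpha)\models N$ for the very $N$ in which (Multi-com) fires, you do not actually need subject reduction to propagate validity through earlier steps; it is only the term-evaluation lemma (itself a building block of the subject-reduction proof) that is doing the work here.
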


%

\noindent
Back again to our example, we have that 
$1^{\ell_{cp}} \in \Theta(\ell_{cp})$, where $(\dsem{1}^1_{\Sigma^1_{\ell_{cp}}})_{\downarrow_2} = 1^{\ell_{cp}}$,  
and where $v$ is the actual value received by the first sensor.
Similarly, we have that
$(\ell_{cp},\mess{\hat{v}}) \in \kappa(\ell_{a})$, 
where $\hat{v} = v_{\downarrow_2}$.

\paragraph{Action Tracking}
The new component  of the analysis $\alpha$ allows us to perform other checks on actuators that 
might suggest to use a simpler actuator if some of its actions are never triggered, or even to remove it if it is never used.
In the following definitions  $\rightarrow^*$ denotes
the reflexive and transitive closure of $\rightarrow$.

\begin{definition}
The node $N$ with label $\ell$ {\em can never fire} an action $\gamma$ on actuator $j$ if whenever
$N \rightarrow^* N'$ then there is no transition
$N' \rightarrow N''$ obtained by applying  the rule \mbox{(A-com)} on the action $\gamma$ on actuator $j$.
\end{definition}
%
%
\begin{definition}
An action $\gamma$ is {\em never triggered} {by $N = \ell : [B]$} on $j$ if
$\nNAFORM{N}$, and $\gamma \not\in \alpha_{\ell}(j)$.
\end{definition}
Since we are over-approximating, the presence of $\gamma$ in $\alpha_{\ell}(j)$ only implies that the node {\em may} trigger the action $\gamma$ on actuator $j$.
In our running example, the inclusion
$\{\mathsf{turnon},\mathsf{turnoff}\} \subseteq \alpha_{_{\ell_{p}}}(5)$ only says that the two actions are reachable.
Instead if, e.g.~the output $\mess{5,\mathsf{turnoff}}$ were erroneously omitted in the last line of the specification of process $P_{p,1}$, then our CFA
would detect that $\mathsf{turnoff} \not\in \alpha_{_{\ell_{p}}}(5)$.

\begin{theorem}
Given a node $N$ with label $\ell$, if an action $\gamma$ is {\em never triggered} on $j$, then 
$N$ with label $\ell$ {\em can never fire} an action $\gamma$ on actuator $j$.
\end{theorem}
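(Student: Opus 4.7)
The plan is to argue by contraposition, combining subject reduction with a structural inspection of the clause that governs commands to actuators.

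Suppose that $N$ with label $\ell$ can fire $\gamma$ on actuator $j$. By definition there exists a sequence $N \rightarrow^* N' \rightarrow N''$ whose last step uses the rule \mbox{(A-com)} for the action $\gamma$ on actuator $j$. Inspecting \mbox{(A-com)}, this last reduction is only enabled when $N'$ contains, inside the node labelled $\ell$, two components of the shape $\OUTS{j,\gamma}{P}$ and $(\!|j,\Gamma|\!).A$ sitting in parallel, with $\gamma \in \Gamma$; that is, up to structural congruence, $N' \equiv N_0 \mid \ell:[\,\OUTS{j,\gamma}{P} \ \|\ (\!|j,\Gamma|\!).A \ \|\ B\,]$ for some $N_0$ and $B$.

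First, I would iterate Theorem~\ref{SRtheorem} along $N \rightarrow^* N'$. Since the hypothesis gives $\nNAFORM{N}$ and the compatibility of stores is preserved at each step, a straightforward induction on the length of the reduction yields $\nNAFORM{N'}$ for the same quadruple of analysis components. Here one minor but routine point is that the rule (CongrY) is absorbed by the analysis being closed under structural congruence; this follows by induction on the derivation of $\equiv$ using the fact that each clause in Tables~\ref{analysis} and~\ref{analysisT} treats parallel composition componentwise and treats unfolding through the $\lfloor \mu h.\,P \rfloor_d$ clause.

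Next, from $\nNAFORM{N'}$ I would read off the clauses of the CFA downwards along the syntactic shape of $N'$: the rule for $\mid$ on nodes gives $\nNAFORM{\ell:[\,\OUTS{j,\gamma}{P} \ \|\ (\!|j,\Gamma|\!).A \ \|\ B\,]}$; the rule for a single node gives $\nPAFORM{\ell}{\OUTS{j,\gamma}{P} \ \|\ (\!|j,\Gamma|\!).A \ \|\ B}$; and the rule for $\|$ on node components yields, in particular, $\nPAFORM{\ell}{\OUTS{j,\gamma}{P}}$. The clause for the output-to-actuator construct in Table~\ref{analysis} has as a conjunct exactly $\gamma \in \alpha_{\ell}(j)$, so we conclude $\gamma \in \alpha_{\ell}(j)$, contradicting the hypothesis $\gamma \notin \alpha_{\ell}(j)$.

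The only delicate point is the handling of structural congruence and of iteration: one must check that the unfolding $\mu h.P \equiv P\{\mu h.P/h\}$ does not introduce occurrences of $\OUTS{j,\gamma}{P'}$ that were absent from the analysis obligations. This is, however, precisely what the clause $\nPAFORM{\ell}{\lfloor \mu h.\,P \rfloor_d}$ ensures: unfolding up to depth $d$ exposes every syntactic occurrence of sub-process forms, so each $\OUTS{j,\gamma}{P'}$ reachable by reduction must already have been inspected by the analysis, forcing $\gamma$ into $\alpha_{\ell}(j)$. Everything else in the argument is a routine combination of subject reduction and structural inspection.
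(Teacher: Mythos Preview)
Your proposal is correct and follows essentially the same approach as the paper's own proof: assume for contradiction that an \mbox{(A-com)} step on $\gamma$ and $j$ is reachable, use subject reduction to transport $\nNAFORM{N}$ to $\nNAFORM{N'}$, and then read off from the CFA clause for $\OUTS{j,\gamma}{P}$ that $\gamma \in \alpha_{\ell}(j)$. Your version simply spells out in more detail the descent through the parallel-composition and single-node clauses and the handling of structural congruence and unfolding, which the paper leaves implicit.
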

\begin{proof}
By contradiction suppose that there exists a transition $N' \rightarrow N''$, derived by using the rule  \mbox{(A-com)}, with action $\gamma$ and actuator $j$. 
Then a control process in $N'$ must include the prefix $\langle{j,\gamma}\rangle$. 
Now, because of Theorem~\ref{SRtheorem}, we have that $\nNAFORM{N}$ implies $\nNAFORM{N'}$. 
As a consequence, the analysis for $N'$ must hold for the process $\OUTS{j,\gamma}{P}$ and therefore $\gamma \in \alpha_{\ell}(j)$: contradiction.
\end{proof}

Similarly, at run time the actuator $j$ never fires an action if $j$ is {\em never used}, defined as follows:

\begin{definition}
An actuator $j$ is {\em never used} {in $N = \ell : [B]$} if
$\nNAFORM{N}$, and $\alpha_{\ell}(j) = \emptyset$.
\end{definition}
%

A major feature of our approach is that the two properties above, as well as all those mentioned in the next section and many more, can be verified on the estimates given by the CFA, with no need of recomputing them.

\section{Security properties}
	

\paragraph{Preventing Leakage}
Our CFA can be exploited to check some security properties, along the lines of~\cite{BBDNN_InfComp,phdBod}, by using some ideas of~\cite{Ab97}.
For the moment, our analysis implicitly considers the presence of only {\em passive} attackers, able 
to eavesdrop all the messages in clear and to decrypt if in possession of the right key.
We can take into account active attackers, by modifying our CFA in the style of~\cite{BBDNN_JCS}.

A common approach to system security is identifying the sensitive content of information and detect possible disclosures.
Hence, we partition values
into security classes and prevent classified information from flowing in clear or to the wrong places.
In the following, we assume as given the sets ${\cal S}_{\ell}$ and ${\cal P}_{\ell}$ of {\em secret} and of \emph{public}
values for the node $N$ with label ${\ell}$. 
It is immediate to have a hierarchy of
classification levels associated with values, instead of just two levels.

The designer classifies constants and values of sensors as secret or public. 
The computed values, both concrete and abstract, are partitioned through the concrete and the abstract operators $D_{cls}$ and $S_{cls}$.
The intuition behind them is that
a single ``drop'' of $\s$ turns to \emph{secret} the kind of a(n abstract) term.
Of course there is the exception for encrypted data:
what is encrypted is $\p$, even if it
contains secret components.
Accordingly, in the analysis we replace $\top$ with the two abstract terms $\top_{\!\!s}$ and  $\top_{\!\!p}$ for abstracting secret and public terms, respectively.


\begin{definition}
Given ${\cal S}_{\ell}$ and ${\cal P}_{\ell}$, the {\em static} operator
$S_{cls}: \widehat{\cal V}  \rightarrow \{\s, \p\}$
is defined as follows:
\[
\begin{array}{ll}
S_{cls}(\top_{\!\!s}) = \s
 &
 S_{cls}(\top_{\!\!p}) = \p
 \\
S_{cls}(i^{\ell}
) = \casesdef{\!\!\! \s}{\mbox{ if }i^{\ell}  \in {\cal S}_\ell}{\!\!\! \p}{\mbox{ if }i^{\ell}  \in
{\cal P}_\ell}
&
S_{cls}(v^{\ell} ) = \casesdef{\!\!\! \s}{\mbox{ if }v^{\ell}  \in {\cal S}_\ell}{\!\!\! \p}{\mbox{ if }v^{\ell}  \in
{\cal P}_\ell}
\\
S_{cls}(f^{\ell}(\hat{v}_1,\ldots,\hat{v}_k)) = S_{cls}(\{\hat{v}_1,\ldots,\hat{v}_k\})
&
S_{cls}(\{\hat{v}_1,\ldots,\hat{v}_k\}^\ell_{k_0}) = \p
\\
S_{cls}(W) = \casesdef{\s}
{\mbox{if } \exists \hat{v} \in W: S_{cls}(\hat{v}) = \s}
{\p}{\mbox{otherwise.}}
\end{array}
\]
\end{definition}
\begin{definition}
Given ${\cal S}_{\ell}$ and ${\cal P}_{\ell}$, the {\em dynamic} operator
$D_{cls}: {\cal V}  \rightarrow \{\s, \p\}$
is defined as:
\[
\begin{array}{ll}
D_{cls}(i) = \casesdef{\!\!\! \s}{\mbox{ if } i  \in {\cal S}_{\ell}}{\!\!\! \p}{\mbox{ if }i \in
{\cal P}_{\ell}}
&
D_{cls}(v) = \casesdef{\!\!\! \s}{\mbox{ if } v  \in {\cal S}_{\ell}}{\!\!\! \p}{\mbox{ if }v  \in
{\cal P}_{\ell}}
\\
D_{cls}(f(v_1,\ldots,v_k)) = 
D_{cls}(\{v_1,\ldots,v_k\})
&
D_{cls}( \{ v_1,\ldots,v_k\}_{k_0}) = \p
\\
D_{cls}(V) = \casesdef{\s}
{\mbox{if }\exists v \in V: D_{cls}(v) = \s}
{\p}{\mbox{otherwise.}}
\end{array}
\]
\end{definition}

Since our analysis computes information on the values exchanged during the communication, we can statically check whether a value, devised to be secret to a node $N$, is never sent
to another node.
We first give a dynamic characterisation of when a node $N$ never discloses its secret values, i.e.\ when neither it nor any of its derivatives can send a message that includes a secret value.
\begin{definition}
The node $N$ with label $\ell$ has {\em no leaks} w.r.t.~${\cal S}_{\ell}$ if $N \rightarrow^* N'$ and
there is no transition
$N' \xrightarrow{\mess{v_1,\dots,v_n}}_{\ell, \ell'} N''$
such that $D_{cls}(v_i) = \s$ for some $i$.
\end{definition}
The component $\kappa$ allows us to define a static notion that corresponds to
the above dynamic property: a node $\ell$ is \emph{confined} when all the values of its messages are public.
Confinement suffices to guarantee that $N$ has no leaks, as an immediate consequence of Corollary~\ref{cor:Theta}, where also the partition of values is considered. 

\begin{definition}
A node $N$ with label $\ell$ is {\em confined} w.r.t.~${\cal S}_{\ell}$ if 
\begin{itemize}
\item
$\nNAFORM{N}$ and
\item
$\forall \ell' \in {\cal L}$ such that $(\ell,  \mess{\hat{v}_1,\cdots,\hat{v}_k}) \in  \kappa(\ell')$ we have that
$\forall i. \ S_{cls}(\hat{v}_i) = \p$
 \end{itemize}
\end{definition}

\begin{theorem}
Given a node $N$ with label $\ell$, if $N$ is confined w.r.t.~${\cal S}_{\ell}$, then 
$N$ has no leaks w.r.t.~${\cal S}_{\ell}$.
\end{theorem}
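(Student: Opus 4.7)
The plan is to argue by contradiction, in the same spirit as the proof of the analogous \emph{never triggered} theorem: assume $N$ is confined but has a leak, and derive a violation of confinement by combining subject reduction with Corollary~\ref{cor:Theta}. So suppose $N \rightarrow^* N'$ and $N' \xrightarrow{\mess{v_1,\dots,v_n}}_{\ell,\ell'} N''$ with $D_{cls}(v_{i_0}) = \s$ for some index $i_0$. First, I would apply Theorem~\ref{SRtheorem} (iterating it along the reduction sequence $N \rightarrow^* N'$) to conclude that the estimate $(\hat\Sigma,\kappa,\Theta,\alpha)$ that validates $\nNAFORM{N}$ also validates $\nNAFORM{N'}$, and that the instrumented store agreement $\Sigma^i \bowtie \hat\Sigma$ is preserved. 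Then Corollary~\ref{cor:Theta} applied to the emitting step $N' \xrightarrow{\mess{v_1,\dots,v_n}}_{\ell,\ell'} N''$ yields $(\ell, \mess{\hat v_1,\dots,\hat v_n}) \in \kappa(\ell')$, where each $\hat v_i = v_{i\downarrow_2}$ is the abstract second component computed by the instrumented semantics.

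The core of the argument, and the step that I expect to be the main obstacle, is a bridging lemma tying the \emph{dynamic} classification $D_{cls}$ on the concrete projection $v_{i\downarrow_1}$ to the \emph{static} classification $S_{cls}$ on the abstract projection $v_{i\downarrow_2}$. Concretely I would prove, by structural induction on the instrumented term / value, that
\[
D_{cls}(v_{i\downarrow_1}) = \s \ \Longrightarrow\ S_{cls}(v_{i\downarrow_2}) = \s.
\]
The base cases for sensor identifiers and constant values are immediate from the definitions of ${\cal S}_\ell,{\cal P}_\ell$ and the instrumented rule $\dsem{v}^i_{\Sigma^i_\ell} = (v,v^\ell)$. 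The function case follows because both $D_{cls}$ and $S_{cls}$ are defined by lifting through $f$ as the classification of the argument set, and by the induction hypothesis any secret concrete component entails a secret abstract component. The encryption case is trivial since both operators return $\p$. The only non-structural point is the cut value $\top$: here one must use the refinement discussed in Section~5, where $\top$ is split into $\top_{\!\!s}$ and $\top_{\!\!p}$ precisely so that cut abstract terms faithfully retain their secrecy class; the invariant is that whenever a concrete function term is abstracted by $\top_{\!\!p}$, none of its ingredients is classified $\s$.

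Once this bridging lemma is in place, the contradiction is immediate: from $D_{cls}(v_{i_0\downarrow_1}) = \s$ the lemma gives $S_{cls}(\hat v_{i_0}) = \s$, while confinement of $N$ together with $(\ell, \mess{\hat v_1,\dots,\hat v_n}) \in \kappa(\ell')$ forces $S_{cls}(\hat v_{i_0}) = \p$, a contradiction. Hence no such leaking transition can exist, and $N$ has no leaks with respect to ${\cal S}_\ell$. The overall structure mirrors the earlier never-triggered theorem, with Corollary~\ref{cor:Theta} playing the role that the $\alpha$ component played there, and the $S_{cls}/D_{cls}$ agreement playing the role of the trivial syntactic match $\gamma \in \alpha_\ell(j)$.
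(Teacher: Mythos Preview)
Your proposal is correct and follows the same contradiction-via-Corollary~\ref{cor:Theta} route as the paper. The paper's proof is in fact terser than yours: it simply invokes the corollary and states the contradiction, leaving both the subject-reduction step to $N'$ and the $D_{cls}/S_{cls}$ bridging implicit, whereas you spell them out (and correctly note the role of the $\top_{\!\!s}/\top_{\!\!p}$ split).
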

\begin{proof}
By contradiction suppose that $N$ is confined and that $N \rightarrow N'$ and that there exists a transition $N' \xrightarrow{\mess{v_1,\dots,v_n}}_{\ell, \ell'} N''$
such that $\exists i : D_{cls}(v_i) = \s$. 
Then, by Corollary~\ref{cor:Theta},
it holds $(\ell,\mess{\hat{v}_1,\dots,\hat{v}_n}) \in \kappa(\ell')$, where $\hat{v}_i = v_{i\downarrow_2}$.
Now, since $N$ is confined, we know that $\forall i. \ S_{cls}(\hat{v}_i) = \p$: contradiction.
\end{proof}

Back to our running example, since the pictures of cars are sensitive data, one would like to check whether they are kept secret.
By classifying $1^{\ell_{cp}}$ as one of the secret elements for the node $\ell_{cp}$, we accordingly get that $noiseRed^{^{\ell_{cp}}}(1^{\ell_{cp}})$ is $\s$.
By
inspecting $\kappa$ 
we discover e.g.\  that the sensitive data of cars is sent in clear to the street supervisor $\ell_a$, so possibly violating privacy; indeed $\kappa(\ell_a) \supseteq \{ (\ell_{cp},\mess{noiseRed^{^{\ell_{cp}}}(1^{\ell_{cp}})})\}$.
To prevent this disclosure the pictures are better sent encrypted.

\paragraph{Communication policies}
Another way of enforcing security is by defining policies on communications that rule information flows among nodes, allowing some flows and 
forbidding others.
Below we consider a {\em no read-up/no write-down}  policy.
It is based on a hierarchy of clearance levels for nodes~\cite{BL73,Den82}, and
it requires that a node classified at a
high level cannot write any value to a node at a lower level,
while the converse is allowed; symmetrically a node at low level
cannot read data from one of a higher level.
%

For us, it suffices to classify the node labels with an assignment function $level : {\mathcal L} \rightarrow {\bf L}$, from the set of node lables to a given set of levels $\bf L$.
We then introduce a condition for characterising the allowed and forbidden flows.
For the policy above, the condition amounts to requiring that
$
level(\ell) \leq level(\ell')
$.
\begin{definition}
Given an assignment \emph{levels}, the node $N$ with label $\ell$ {\em respects the levels} if $N \rightarrow^* N'$ and
there is no transition 
\mbox{s.t.  
$N' \xrightarrow{\mess{v_1,\dots,v_n}}_{\ell, \ell'} N''$}
and $level(\ell) > level(\ell')$.

\end{definition}
Similarly, we define the following static notion that checks when flows
from a node $\ell$ to a node $\ell'$ are allowed.
Also in this case the static condition implies the dynamic one, as an immediate consequence of Corollary~\ref{cor:Theta}.
\begin{definition}
Given an assignment \emph{levels}, the node $N$ with label $\ell$ {\em safely communicates} if 
\begin{itemize}
\item
$\nNAFORM{N}$ and
\item
$\forall \ell' \in {\cal L}$ such that $(\ell,  \mess{\hat{v}_1,\cdots,\hat{v}_k}) \in  \kappa(\ell')$ we have that
$level(\ell) \leq level(\ell')$
 \end{itemize}
\end{definition}
%


\begin{theorem}
Given a node $N$ with label $\ell$, if $N$ {\em respects the levels} ${\bf L}$, then 
$N$ {\em safely communicates}.
\end{theorem}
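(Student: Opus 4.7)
The plan is to mimic the contradiction argument used in the preceding theorem about confinement and leaks, swapping the roles of the classification operators with the level comparison. Suppose $N$ respects the levels and, for contradiction, that $N$ does not safely communicate. Since the existence of an estimate $\nNAFORM{N}$ is guaranteed by the Moore-family property for the CFA mentioned just before the subject-reduction theorem, the only way the second clause of \emph{safely communicates} can fail is that there exist $\ell'$ and an entry $(\ell,\mess{\hat v_1,\ldots,\hat v_k}) \in \kappa(\ell')$ with $level(\ell) > level(\ell')$.

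From this point I would try to produce an actual transition $N \rightarrow^* N' \xrightarrow{\mess{v_1,\ldots,v_k}}_{\ell,\ell'} N''$ with $\hat v_i = v_{i\downarrow_2}$, which immediately contradicts the hypothesis that $N$ respects the levels (the inequality between node labels transfers verbatim, since it depends only on $\ell,\ell'$ and is insensitive to the payload). The natural way to carry this step out is to work with the minimal estimate and induct on the CFA derivation for the offending entry: locate the syntactic prefix $\OUTM{E_1,\ldots,E_k}{L}$ with $\ell' \in L$ whose analysis clause forced the insertion into $\kappa(\ell')$, then exhibit a reduction sequence reaching a configuration in which that prefix sits at the top of the node labelled $\ell$ and a compatible input at $\ell'$ enables rule \textup{(Multi-com)}.

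The hard part is that this direction is exactly opposite to Corollary~\ref{cor:Theta}, which lifts actual transitions to predicted ones: the direction required here is a completeness (tightness) statement, saying each predicted message is witnessed by a reachable run. Establishing this tightness for the minimal estimate — ruling out spurious $\kappa$ entries coming from outputs sitting under never-true guards, inside never-matched inputs, or under unreachable decryptions — is the real technical core of the argument. Once in place, the contradiction closes in one line, in perfect parallel with the final step of the confinement proof: the recovered transition violates \emph{respects the levels}, contradicting our hypothesis and proving the theorem.
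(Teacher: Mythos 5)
There is a genuine problem, but it lies as much in the statement as in your attempt. You read the theorem literally and set out to prove the dynamic-to-static direction (\emph{respects the levels} $\Rightarrow$ \emph{safely communicates}). That direction requires exactly the completeness/tightness property you flag as ``the real technical core'': that every entry $(\ell,\mess{\hat v_1,\dots,\hat v_k}) \in \kappa(\ell')$ in the (even minimal) estimate is witnessed by a reachable concrete transition. For an over-approximating CFA this is not merely hard, it is false: the analysis of a conditional descends into both branches and the analysis of an input binds variables from every compatible $\kappa$ entry, so an output sitting under a never-true guard or a never-matched input still contributes to $\kappa$. A node can therefore respect the levels at run time while failing the static check, and no induction on the CFA derivation will recover a witnessing run. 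Your plan stalls precisely where you predicted it would, and it cannot be completed.

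The paper's own proof makes clear that the intended claim is the converse --- the text just above it says ``the static condition implies the dynamic one, as an immediate consequence of Corollary~\ref{cor:Theta}'', matching the preceding confinement theorem; the theorem statement has hypothesis and conclusion swapped. The actual argument is the three-line contradiction you describe as the template but apply in the wrong direction: assume $N$ \emph{safely communicates}, suppose $N \rightarrow^* N'$ and $N' \xrightarrow{\mess{v_1,\dots,v_n}}_{\ell,\ell'} N''$ with $level(\ell) > level(\ell')$; by Corollary~\ref{cor:Theta} the transition forces $(\ell,\mess{\hat v_1,\dots,\hat v_n}) \in \kappa(\ell')$ with $\hat v_i = v_{i\downarrow_2}$; safe communication then gives $level(\ell) \leq level(\ell')$, a contradiction. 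So the corollary is used to lift actual transitions to predicted ones --- the direction in which it actually holds --- and no tightness of the estimate is needed. Your proof should be rewritten for that direction; as written, it targets an unprovable implication.
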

\begin{proof}
By contradiction suppose that $N$ safely communicates and that $N \rightarrow N'$ and that there exists a transition $N' \xrightarrow{\mess{v_1,\dots,v_n}}_{\ell, \ell'} N''$
such that $level(\ell) > level(\ell')$. 
Then, by Corollary~\ref{cor:Theta},
it holds $(\ell,\mess{\hat{v}_1,\dots,\hat{v}_n}) \in \kappa(\ell')$, where $\hat{v}_i = v_{i\downarrow_2}$.
Now, since $N$ safely communicates, $level(\ell) \leq level(\ell')$: contradiction.
\end{proof}

Of course, one can mix the above checks to verify a composition of the considered properties, e.g.~by checking whether a particular secret value does not flow to a specific node, even if it has a higher level.

More in general, we can constrain communication flows according to a specific policy, by replacing the 
condition on the levels used above with a suitable predicate 
$
\phi(\ell,\ell')
$.
In our running example, we can check (the trivial fact) that the communication from node $N_a$ to $N_s$ is allowed by the policy, while those in the other direction are not. 
Also, note that a pretty similar approach can be used to deal with trust among nodes: instead of assigning nodes a security level, just give them a discrete measure of their trust.
Similarly, we could check whether a node is allowed to use a particular aggregation function.

	
\section{Conclusions}\label{sec:conclusion}

In the companion paper~\cite{BDFG_Coord16} we 
introduced the process calculus \IoTLySa\ for describing IoT systems, which has primitive constructs for describing the activity of sensors and of actuators, and suitable primitives for managing the coordination and communication capabilities of  smart objects.
This calculus is endowed with a Control Flow Analysis that statically predicts the interactions among nodes, how data spread from sensors to the network, and how data are put together.

Here, we extended IoT-LySa with cryptographic primitives and we enriched the static analysis in order to predict which actions actuators can perform.
Building over the analysis, we began an investigation about security issues, by checking some properties. 
The first one is secrecy: we check whether a value, devised to be secret to a node, is never sent to another node in clear.
The second property has to do with a classification of clearance levels of confidentiality: exploiting the analysis estimates we verify if  the predicted communications never flow from a high level node to a lower level one.

As future work, we would like to investigate
how sensors data affect actuators, by tracking the
information flow between sensors and actuators.
We also would like  to capture dependencies among the data sent by sensors and
the actions carried out by actuators.

To the best of our knowledge, 
only a limited number of papers addressed the specification and verification of IoT systems from a process algebraic perspective,
e.g.\ \cite{lanese13,merro16} to cite only a few.
Also relevant are the papers that formalised wireless, sensor and ad hoc networks, among which~\cite{lanese10,singh10,NNN10}.
We shared many design choices with the above-mentioned proposals, while the main difference consists in the underlying coordination model.
Ours is based on a shared store \`a la Linda instead of a message-based communication \`a la $\pi$-calculus.
Furthermore, differently from~\cite{lanese13,merro16}, 
our main aim is that of developing a design framework that includes a static semantics to support verification techniques and tools for certifying properties of IoT applications.

\bibliographystyle{eptcs}
\bibliography{biblio}



\end{document}